\newtheorem{theorem}{Theorem}
\newtheorem{definition}{Definition}
\newtheorem{remark}{Remark}
\title{\LARGE \bf
Multiplier analysis of Lurye systems with power signals
}
\author{William P. Heath$^{1}$ and Joaquin Carrasco$^{1}$
\thanks{$^{1}$Control Systems and Robotics Group,
Department of Electrical and Electronic Engineering, Engineering Building A, University of Manchester, M13
9PL, UK. {\tt\small william.heath@manchester.ac.uk,
joaquin.carrascogomez@manchester.ac.uk}}
}
\begin{document}

\maketitle
\thispagestyle{empty}
\pagestyle{empty}

\begin{abstract}
Multipliers can be used to guarantee both the Lyapunov stability and input-output stability of Lurye systems with time-invariant memoryless slope-restricted nonlinearities. If a dynamic multiplier is used there is no guarantee the closed-loop system has finite incremental gain. It has been suggested in the literature that without this guarantee such a system may be critically sensitive to time-varying exogenous signals including noise. We show that multipliers guarantee the  power gain of the system to be bounded and quantifiable. Furthermore power may be measured about an appropriate steady state bias term, provided the multiplier does not require the nonlinearity to be odd. Hence dynamic multipliers can be used to guarantee Lurye systems have low sensitivity to noise, provided other exogenous systems have constant steady state. We illustrate the analysis 
with an example where the exogenous signal is a power signal with non-zero mean.
\end{abstract}

\section{INTRODUCTION}

We are concerned with  Lurye systems where the nonlinearity is memoryless, time-invariant and slope-restricted. The existence of a suitable OZF (O'Shea-Zames-Falb) multiplier can be used to guarantee finite-gain $\mathcal{L}_2$ stability \cite{Zames68,desoer75,Veenman16}. Modern tools can search for suitable multipliers and give an upper bound on the $\mathcal{L}_2$ gain \cite{Megretski04,Kao04,Turner12,Veenman16}.

It has been argued in the literature both specifically with respect to Lurye systems \cite{Kulkarni02,Waitman17} and more generally \cite{Zames66a,Fromion96,Angeli02,Sepulchre22,Chaffey23} that emphasis should be given to finite incremental gain. In particular if exogenous signals are not in $\mathcal{L}_2$ then desirable properties one might infer for linear systems do not necessarily carry over to nonlinear systems. Zames \cite{Zames66a} argues
 that a definition of closed-loop stability should require both continuity and boundedness, inter alia so that outputs are not ``critically sensitive to small changes in inputs --- changes such as those caused by noise''; Theorem 3 in \cite{Zames66a} gives conditions on the incremental positivity of the loop elements that are sufficient to achieve this.
Similarly it was known at the time \cite{Brockett66a} that the circle criterion could be used to guarantee ``the existence of unique steady-state oscillations (= absence of ``jump phenomena'' and subharmonics) in forced nonlinear feedback systems''. It was subsequently established \cite{Kulkarni02} that dynamic multipliers do not, in general, preserve the positivity of nonlinearities.
As noted in \cite{Kulkarni02} there is some irony that the definition of stability used by Zames and Falb in \cite{Zames68} does {\em not} require  closed-loop continuity.

Certainly lack of finite incremental gain can lead to undesirable effects.
A construction for a system where the input period is not preserved is given in \cite{Benes65} (although the specific example is closed-loop unstable). An example  of a Lurye system where small changes in input can lead to significant changes in output is given in \cite{Fromion04}.  We discuss a further example in this paper. However, we argue that finite incremental gain is not necessary to ensure insensitivity to noise signals. In particular the existence of a suitable OZF multiplier can be used to guarantee a Lurye system is insensitive to noise for a wide class of exogenous signal. This is timely in that OZF multipliers have recently been proposed for the design of control systems with saturation \cite{Veenman14,Bertolin22}.

 In this paper we characterise the response of such Lurye systems for two classes of exogenous signals that are not in $\mathcal{L}_2$. Specifically we consider power signals and signals with a constant bias. We observe that finite-gain stability ensures small power noise input leads to small power output when all other exogenous signals are in $\mathcal{L}_2$ (Theorem~\ref{thm:power_L2}). We define a notion of finite-gain offset stability (Definition~\ref{def:offset}) and observe similarly that finite-gain offset stability ensures small power noise input leads to small power output when all other exogenous signals are bias signals (Theorem~\ref{thm:power_offset}). Our main result (Theorem~\ref{thm:offset}) is to show that, provided we do not exploit any oddness of the nonlinearity, the existence of a suitable OZF multiplier guarantees such properties. Corresponding discrete-time results follow immediately and  we illustrate the results with a discrete-time example.

 \section{Preliminaries}

 \subsection{Signals and systems}

 Let $\mathcal{L}_2$ be the space of finite energy Lebesgue integrable signals on $[0,\infty)$  with norm
 \begin{equation}
 \|y\| = \left ( \int_0^{\infty}y(t)^2 \,dt\right )^{\frac{1}{2}}.
 \end{equation}
 Let  $\mathcal{L}_{2e}$ be the corresponding extended space (see for example  \cite{desoer75}). The truncation $y_T\in\mathcal{L}_2$ of $y\in\mathcal{L}_{2e}$ is given by
 \begin{equation}
     y_T(t) = \left \{ \begin{array}{lll}y(t) & \text{for} & 0\leq t \leq T,\\
     0 & \text{for} & T<t.\end{array}\right .
 \end{equation}
 \begin{definition}
 Let $\mathcal{P}\subset\mathcal{L}_{2e}$ be the space of finite power Lebesgue integrable signals on $[0,\infty)$ with seminorm
 \begin{equation}\label{def_P}
 \|y\|_P = \left (\limsup_{T\rightarrow\infty}\frac{1}{T}\|y_T\|^2\right )^{\frac{1}{2}}.
 \end{equation}
 We say $y$ is a power signal if $y\in\mathcal{P}$.
 Define the bias $\bar{y}\in\mathbb{R}$ of  a signal $y\in\mathcal{P}$ as
\begin{equation}
    \bar{y} = 
    \arg \min_{\bar{y}\in\mathbb{R}}\|(y-\bar{y}\theta)\|_P.
\end{equation}
We say $y$ is an $\mathcal{L}_2$-bias signal with bias $\bar{y}$ if $\bar{y}$ is unique and $y-\bar{y}\theta\in\mathcal{L}_2$.
\end{definition}

\begin{remark}
The limit superior in (\ref{def_P}) does not appear in standard definitions of power (e.g. \cite{vidyasagar93,Zhou96}) but is necessary to ensure $\mathcal{P}$ is a vector space \cite{Partington04,Mari96}\footnote{We are grateful to Andrey Kharitenko for this observation.}.
\end{remark}

 Let $\theta\in\mathcal{P}$ be the Heaviside step function given by 
     $\theta(t)=1$ for all  $t>0$.

 A map $H:\mathcal{L}_{2e}\rightarrow\mathcal{L}_{2e}$ is stable if $u\in\mathcal{L}_2$  implies $Hu\in\mathcal{L}_2$. It is finite-gain stable (FGS) if there is some $h<\infty$ such that $\|Hu\|\leq h \|u\|$ for all $u \in \mathcal{L}_2$. Its gain is the smallest such~$h$.

\begin{remark}\label{rem:ic1}
 Our definition of finite-gain stability carries the assumption that we have zero initial conditions.
 Non-zero initial conditions can be accommodated provided they can be represented with a nonlinear state-space description that is reachable and uniformly observable \cite{vidyasagar93}. 
 \end{remark}

\begin{definition}\label{def:offset}
Let $H:\mathcal{L}_{2e}\rightarrow\mathcal{L}_{2e}$. We say $H$ is offset stable if there is some function $H_0:\mathbb{R}\rightarrow\mathbb{R}$ such that if $u$ is an $\mathcal{L}_2$-bias signal with bias $\bar{u}$ then $Hu$ is an $\mathcal{L}_2$-bias signal with bias $H_0(\bar{u})$.
We call $H_0$ the steady state map of $H$.
Define $H_{\bar{u}}:\mathcal{L}_{2e}\rightarrow\mathcal{L}_{2e}$ as
\begin{equation}
H_{\bar{u}}u=H(u+\bar{u}\theta)-H_0(\bar{u}).
\end{equation}
It follows that $H$ is offset stable if $H_{\bar{u}}$ is stable for all $\bar{u}\in\mathbb{R}$.
We say $H$ is finite-gain offset stable (FGOS) if there is some $h<\infty$ such that $H_{\bar{u}}$ is FGS  with gain less than or equal to $h$ for all $\bar{u}\in\mathbb{R}$. We call the minimum such $h$ the offset gain of $H$.
\end{definition}

 \subsection{Lurye systems}\label{sec:Lurye}
 
We are concerned with the behaviour of the Lurye system (Fig.~\ref{fig:Lurye}) given by
\begin{equation}
y_1=\boldsymbol{G}u_1,\mbox{ } y_2=\boldsymbol{\phi} u_2,\mbox{ } u_1=r_1-y_2 \mbox{ and }u_2 = y_1+r_2.\label{eq:Lurye}
\end{equation}


The  Lurye system~(\ref{eq:Lurye}) is assumed to be well-posed with $\boldsymbol{G}:\mathcal{L}_{2e}\rightarrow\mathcal{L}_{2e}$  linear time invariant (LTI) causal and stable, and with $\boldsymbol{\phi}:\mathcal{L}_{2e}\rightarrow\mathcal{L}_{2e}$ memoryless and time-invariant. 
 We say such a memoryless, time-invariant $\boldsymbol{\phi}$ is characterised by $N:\mathbb{R}\rightarrow\mathbb{R}$ with $(\boldsymbol{\phi}(u))(t) = N(u(t))$.
 We will use ${G}$ to denote the transfer function corresponding to $\boldsymbol{G}$. Where appropriate we will consider either ${G}: j \mathbb{R}\rightarrow \mathbb{C}$ (i.e. ${G}(j\omega))$ or ${G}:\bar{\mathbb{C}}_{+}\rightarrow \mathbb{C}$ (i.e. ${G}(s)$) where $\bar{\mathbb{C}}_{+} = \{ s\in\mathbb{C}: Re(s)\geq 0\}$.

The nonlinearity $\boldsymbol{\phi}$, characterised by $N$, is assumed to be monotone in the sense that
$N(x_1)\geq N(x_2)$ for all $x_1\geq  x_2$. It is also assumed to be bounded in the sense that there exists a $C\geq 0$ such that $|N(x)|\leq C|x|$ for all $x\in\mathbb{R}$. We say  $\boldsymbol{\phi}$ is slope-restricted on $[0,k]$ if $0\leq (N(x_1) - N(x_2))/(x_1-x_2)\leq k$ for all $x_1\neq x_2$. We say $\boldsymbol{\phi}$ is odd if $N(x)=-N(-x)$ for all $x\in\mathbb{R}$.

 The Lurye system is stable if $r_1,r_2\in\mathcal{L}_2$  implies $u_1,u_2,y_1,y_2\in\mathcal{L}_2$. It is FGS if there is some $h<\infty$ such that
\begin{equation}
    \begin{split}
\|y_i\|\leq h (\|r_1\|+\|r_2\|)\text{ and }\|u_i\|\leq h (\|r_1\|+\|r_2\|),\\ 
\text{for }i=1,2\text{ and for all }r_1, r_2 \in \mathcal{L}_2.
    \end{split}
\end{equation}
\begin{remark}\label{rem:ic2}
Non-zero initial conditions can be accommodated in our definition of finite gain stability for such a Lurye system (c.f. Remark~\ref{rem:ic1}). Specifically, since the nonlinearity $\boldsymbol{\phi}$ is Lipschitz and the LTI transfer function $G$ admits a minimal state-space representation,  non-zero initial conditions can be accommodated by extending the time line backwards and including some fictitious exogenous signal over this extension. See \cite{vidyasagar93}, pp290-291.
\end{remark}
Since $\boldsymbol{G}$ is LTI stable we can set $r_1=0$ without loss of generality, and it is sufficient for finite-gain stability that there is some $h<\infty$ with
$\|y_2\|\leq h \|r_2\|$.
We will denote $H$ as the map from $r_2$ to $y_2$ and define the gain of the Lurye system to be the gain of $H$. Similarly for offset stability and offset gain.

\begin{figure}[htbp]
\begin{center}
\ifx\JPicScale\undefined\def\JPicScale{1}\fi
\unitlength \JPicScale mm
\begin{picture}(60,30)(0,0)
\linethickness{0.3mm}
\put(0,25){\line(1,0){7.5}}
\put(7.5,25){\vector(1,0){0.12}}
\linethickness{0.3mm}
\put(10,5){\line(1,0){15}}
\linethickness{0.3mm}
\put(10,5){\line(0,1){17.5}}
\put(10,22.5){\vector(0,1){0.12}}
\linethickness{0.3mm}
\put(12.5,25){\line(1,0){12.5}}
\put(25,25){\vector(1,0){0.12}}
\linethickness{0.3mm}
\put(35,25){\line(1,0){15}}
\linethickness{0.3mm}
\put(50,7.5){\line(0,1){17.5}}
\linethickness{0.3mm}
\put(35,5){\line(1,0){12.5}}
\put(35,5){\vector(-1,0){0.12}}
\put(0,30){\makebox(0,0)[cc]{$r_1$}}

\put(42.5,30){\makebox(0,0)[cc]{$y_1$}}

\linethickness{0.3mm}
\put(25,30){\line(1,0){10}}
\put(25,20){\line(0,1){10}}
\put(35,20){\line(0,1){10}}
\put(25,20){\line(1,0){10}}
\linethickness{0.3mm}
\put(25,10){\line(1,0){10}}
\put(25,0){\line(0,1){10}}
\put(35,0){\line(0,1){10}}
\put(25,0){\line(1,0){10}}
\put(30,5){\makebox(0,0)[cc]{$\boldsymbol\phi$}}

\put(30,25){\makebox(0,0)[cc]{$\boldsymbol{G}$}}

\linethickness{0.3mm}
\put(10,25){\circle{5}}

\put(12.5,20){\makebox(0,0)[cc]{$-$}}

\put(17.5,30){\makebox(0,0)[cc]{$u_1$}}

\linethickness{0.3mm}
\put(50,5){\circle{5}}

\linethickness{0.3mm}
\put(52.5,5){\line(1,0){7.5}}
\put(52.5,5){\vector(-1,0){0.12}}
\put(60,2.5){\makebox(0,0)[cc]{$r_2$}}

\put(42.5,2.5){\makebox(0,0)[cc]{$u_2$}}

\put(17.5,2.5){\makebox(0,0)[cc]{$y_2$}}

\end{picture}
\end{center}
\caption{Lurye system.} 
\label{fig:Lurye}
\end{figure}
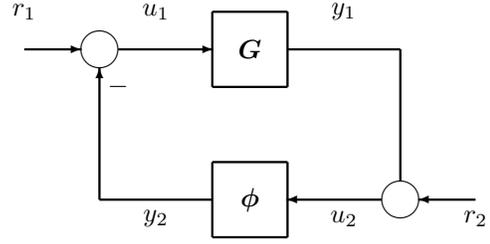


\subsection{Multiplier theory}

\begin{definition}[\cite{Zames68}]\label{def2a}
Let $\mathcal{M}$ be the class of 
transfer functions ${M}:j \mathbb{R}\rightarrow\mathbb{C}$  of systems whose (possibly non-causal) impulse response is given by
\begin{equation}\label{def_m}
m(t) = m_0 \delta(t)-h(t)-\sum_{i=1}^{\infty}h_i \delta(t-t_i),
\end{equation}
with
$ h(t)  \geq 0 \mbox{ for all } t\in\mathbb{R}\mbox{, }h_i\geq 0 \mbox{ for all } i$
and
\begin{equation}\label{OZF_ineq}
\| h\|_1+\sum_{i=1}^{\infty} |h_i| 
 \leq  m_0.
\end{equation}
We say ${M}$ is an OZF multiplier if ${M}\in\mathcal{M}$.
\end{definition}
\begin{definition}[\cite{Zames68}]\label{def2b}
Let $\mathcal{M}_{\text{odd}}$ be the class of 
transfer functions  ${M}:j \mathbb{R}\rightarrow\mathbb{C}$ of systems  whose (possibly non-causal) impulse response is given by (\ref{def_m})
with (\ref{OZF_ineq}).
We say ${M}$ is an OZF multiplier for odd nonlinearities if ${M}\in\mathcal{M}_{\text{odd}}$.
\end{definition}

\begin{definition}\label{def1}
Let 
${M}:j \mathbb{R}\rightarrow\mathbb{C}$ and let ${G}:j \mathbb{R}\rightarrow\mathbb{C}$. 
We say ${M}$ is suitable for ${G}$ if there exists $\varepsilon>0$ such that
\begin{align}
\mbox{Re}\left \{
				{M}(j\omega) {G}(j\omega)
			\right \} > \varepsilon\mbox{ for all } \omega \in \mathbb{R}.
\end{align}
\end{definition}

\begin{theorem}[\cite{Zames68,desoer75}]

If there is 
an ${M}\in\mathcal{M}$ ($\mathcal{M}_{\text{odd}}$) suitable for~${G}$ then the Lurye system (\ref{eq:Lurye}) is FGS for any memoryless time-invariant (odd) monotone bounded nonlinearity $\boldsymbol{\phi}$. 
Furthermore, if there is 
 an ${M}\in\mathcal{M}$  ($\mathcal{M}_{\text{odd}}$) suitable for $1/k+{G}$ then the Lurye system (\ref{eq:Lurye}) is FGS for any memoryless time-invariant (odd) slope-restricted nonlinearity $\boldsymbol{\phi}$ in $[0,k]$.
\end{theorem}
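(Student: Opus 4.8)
\textit{Proof plan.} The plan is to follow the classical O'Shea--Zames--Falb route: a time-domain positivity inequality for the multiplier acting across the nonlinearity, Parseval applied to the frequency-domain hypothesis, and a loop transformation to reduce the slope-restricted case to the monotone case. First I would establish the core multiplier inequality: if $\boldsymbol{\phi}$ is characterised by a monotone $N$ with $N(0)=0$ (which follows from $|N(x)|\le C|x|$) and $\boldsymbol{M}$ denotes the (possibly non-causal) convolution operator with impulse response $m$ as in (\ref{def_m}), then $\langle\boldsymbol{M}u,\boldsymbol{\phi}(u)\rangle\ge 0$ for all $u\in\mathcal{L}_2$ when $M\in\mathcal{M}$, and likewise when $M\in\mathcal{M}_{\text{odd}}$ provided $N$ is additionally odd. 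Writing $F(\tau)=\int N(u(t))u(t-\tau)\,dt$, one has $\langle\boldsymbol{M}u,\boldsymbol{\phi}(u)\rangle=m_0F(0)-\int h(\tau)F(\tau)\,d\tau-\sum_i h_iF(t_i)$, and the two ingredients are: (i) $F(0)=\int N(u(t))u(t)\,dt\ge 0$, since $xN(x)\ge 0$ pointwise; and (ii) the shift bound $F(\tau)\le F(0)$ for every $\tau\in\mathbb{R}$, with $|F(\tau)|\le F(0)$ when $N$ is odd. I would get (ii) from convexity of $\Psi(x):=\int_0^xN(\sigma)\,d\sigma$: the subgradient inequality $\Psi(b)\ge\Psi(a)+N(a)(b-a)$ with $a=u(t)$, $b=u(t-\tau)$ gives $N(u(t))\big(u(t)-u(t-\tau)\big)\ge\Psi(u(t))-\Psi(u(t-\tau))$, and integrating, using shift-invariance of $\int\Psi(u)$ (finite since $0\le\Psi(x)\le\tfrac{C}{2}x^2$), yields $F(0)-F(\tau)\ge0$; when $N$ is odd, $\Psi$ is even and the same computation with $b=-u(t-\tau)$ gives $F(0)+F(\tau)\ge0$. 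Combining (i)--(ii) with (\ref{OZF_ineq}) --- using $h,h_i\ge0$ in the $\mathcal{M}$ case, where only the one-sided bound is available --- gives $\langle\boldsymbol{M}u,\boldsymbol{\phi}(u)\rangle\ge F(0)\big(m_0-\|h\|_1-\sum_i|h_i|\big)\ge 0$.

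Next, since $\|h\|_1$ and $\sum_i|h_i|$ are finite, $\sup_\omega|M(j\omega)|<\infty$, so $\boldsymbol{M}$ is bounded on $\mathcal{L}_2$; and because $\boldsymbol{G}$ is stable, Parseval together with Definition~\ref{def1} gives $\langle\boldsymbol{M}\boldsymbol{G}w,w\rangle=\tfrac{1}{2\pi}\int\text{Re}\{M(j\omega)G(j\omega)\}\,|\hat w(j\omega)|^2\,d\omega\ge\varepsilon\|w\|^2$ for $w\in\mathcal{L}_2$. Following the reduction already made in the excerpt I set $r_1=0$ and let $w=y_2=\boldsymbol{\phi}(u_2)$, so $u_2=r_2-\boldsymbol{G}w$; for $r_2\in\mathcal{L}_2$ the two inequalities combine as $0\le\langle\boldsymbol{M}u_2,w\rangle=\langle\boldsymbol{M}r_2,w\rangle-\langle\boldsymbol{M}\boldsymbol{G}w,w\rangle\le\|M\|_\infty\|r_2\|\,\|w\|-\varepsilon\|w\|^2$ with $\|M\|_\infty=\sup_\omega|M(j\omega)|$, whence $\|y_2\|=\|w\|\le(\|M\|_\infty/\varepsilon)\|r_2\|$ and FGS follows by bounding $u_1,u_2,y_1$ through the loop equations. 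The one technical point is that a priori $u_2\in\mathcal{L}_{2e}$ only, which --- compounded by the possible non-causality of $\boldsymbol{M}$ that prevents a naive truncation of $\langle\boldsymbol{M}u_2,w\rangle$ --- must be dealt with by the standard truncation/limiting (equivalently, homotopy-in-the-nonlinearity) argument of \cite{desoer75}.

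For the slope-restricted case I would use the loop transformation $\hat u:=u_2-k^{-1}\boldsymbol{\phi}(u_2)$, $\hat{\boldsymbol{\phi}}:=\boldsymbol{\phi}\circ(\mathrm{id}-k^{-1}\boldsymbol{\phi})^{-1}$: then $\hat{\boldsymbol{\phi}}$ is monotone (and odd if $\boldsymbol{\phi}$ is), $w=\hat{\boldsymbol{\phi}}(\hat u)$, and $\hat u=r_2-(k^{-1}+\boldsymbol{G})w$, so the monotone case applies verbatim with $\boldsymbol{G}$ replaced by the system with transfer function $1/k+G$. Strict monotonicity and linear-boundedness of $\hat{\boldsymbol{\phi}}$ can fail exactly when the slope attains $k$; I would recover them by first treating $\rho\boldsymbol{\phi}$ for $\rho<1$ --- for which $(\mathrm{id}-k^{-1}\rho\boldsymbol{\phi})^{-1}$ is a genuine Lipschitz function and the bound $\|y_2\|\le(\|M\|_\infty/\varepsilon)\|r_2\|$ is uniform in $\rho$, since $\varepsilon$ depends only on $M$ and $1/k+G$ --- and letting $\rho\uparrow1$, using continuity of the well-posed closed loop in the nonlinearity. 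The parity bookkeeping of Step~1 and the sign pattern of (\ref{OZF_ineq}) carry through unchanged, so the $\mathcal{M}_{\text{odd}}$ statements follow alongside the $\mathcal{M}$ ones.

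I expect the main obstacle to be the multiplier inequality, and within it ingredient (ii): recognising the convex-potential argument and checking that the one-sided shift bound is precisely what the sign constraints $h,h_i\ge0$ of $\mathcal{M}$ permit, while oddness of $N$ is exactly what upgrades it to the two-sided bound required by $\mathcal{M}_{\text{odd}}$. Everything downstream --- the Parseval step, the loop-closing estimate, and the loop transformation --- is routine once this is in place; the only remaining delicate point is the $\mathcal{L}_{2e}$/non-causality passage, which is why I would invoke \cite{desoer75} there rather than reprove it.
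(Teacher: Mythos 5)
The paper gives no proof of this theorem, attributing it to \cite{Zames68,desoer75}; your reconstruction is exactly the classical argument of those references --- the convex-potential shift inequality $F(\tau)\le F(0)$ (upgraded to $|F(\tau)|\le F(0)$ under oddness, which is precisely what licenses sign-indefinite $h$, $h_i$ in $\mathcal{M}_{\text{odd}}$), Parseval applied to the suitability condition, and the $1/k$ loop transformation --- and it is correct as a plan. The only substantive technical debt is the one you flag yourself, the $\mathcal{L}_{2e}$/non-causal-multiplier passage (handled in \cite{desoer75} via a causal/anticausal factorisation of $M$), which is exactly where the cited proofs do the remaining work.
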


\begin{remark}\label{rem:strict}
    We have used a non-strict inequality in (\ref{OZF_ineq}). While the early literature uses a strict inequality \cite{OShea67,Zames68, desoer75} some modern literature (e.g. \cite{Megretski97}) uses a non-strict inequality while some (e.g. \cite{Veenman16}) retains the strict inequality. The distinction is discussed in \cite{Carrasco12}.
\end{remark}



 \subsection{Continuity}
 \begin{definition}\cite{Zames66a}
 The incremental gain of $H:\mathcal{L}_{2e}\rightarrow\mathcal{L}_{2e}$ is the supremum of 
 $\|(Hx)_T-(Hy)_T\| / \|x_T-y_T\|$
 over all $x,y\in\mathcal{L}_{2e}$ and all $T>0$ for which $\|x_T-y_T\|\neq 0$.
 \end{definition}
 Finite-gain stability does not guarantee finite incremental gain: if $H$ is FGS and $u_1,u_2\in\mathcal{L}_2$ the ratio $R=\|Hu_1-Hu_2\|/\|u_1-u_2\|$ may be arbitrarily large. In particular, suppose $v_1,v_2$ are power signals with $v_1-v_2\in\mathcal{L}_2$ but $v_1,v_2\notin\mathcal{L}_2$ and suppose $H$ is FGS but $Hv_1-Hv_2\notin\mathcal{L}_2$. Let $u_1,u_2$ be the truncations $u_1=(v_1)_T$ and $u_2=(v_2)_T$. Then $R\rightarrow\infty$ as $T\rightarrow\infty$.

 As noted in the Introduction, dynamic multipliers do not, in general, preserve the positivity of nonlinearities \cite{Kulkarni02}. An example of a Lurye system where multipliers guarantee finite gain stability but where Lipschitz continuity is lost is given in \cite{Fromion04}.

\section{Power analysis}

In this section we consider FGS systems. The application to Lurye systems where finite-gain stablilty is guaranteed by the existance of a suitable OZF multiplier is immediate.

Zames \cite{Zames66a} argues that ``in order to behave properly'' a system's ``outputs must not be critically sensitive to small changes in inputs - changes such as those caused by noise.'' Further he argues that the input-output map must be continuous to ensure it is ``not critically sensitive to noise.'' Here we show that if the noise is a power signal then finite-gain stability suffices.

The following is standard, at least for linear systems \cite{Zhou96}.

\begin{theorem}\label{thm:power}
Suppose $u\in\mathcal{P}$ and $y=Hu$ where $H$ is FGS with gain $h$.  Then $y\in\mathcal{P}$ with $\|y\|_P\leq h \|u\|_P$.
\end{theorem}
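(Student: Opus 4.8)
The plan is to deduce the power bound from the $\mathcal{L}_2$ bound by passing through truncations, using that $H$ is causal (as are all the maps considered in the paper, being built from the causal $\boldsymbol G$ and the memoryless $\boldsymbol\phi$). The starting observation is that although $u\in\mathcal{P}$ need not lie in $\mathcal{L}_2$, every truncation $u_T$ does, so the FGS hypothesis applies to $u_T$ and gives $Hu_T\in\mathcal{L}_2$ with $\|Hu_T\|\le h\|u_T\|$.

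Next I would invoke causality in the form $(Hu)_T=(Hu_T)_T$, valid for all $u\in\mathcal{L}_{2e}$ and all $T>0$ (causality says $(Hu)_T$ depends only on $u_T$, and $u$ and $u_T$ share the same truncation at $T$). Combining this with the fact that truncation does not increase the $\mathcal{L}_2$ norm,
\[
\|(Hu)_T\| = \|(Hu_T)_T\| \le \|Hu_T\| \le h\,\|u_T\| \quad\text{for every } T>0 .
\]
Squaring, dividing by $T$, and taking $\limsup_{T\to\infty}$ yields $\|y\|_P^2\le h^2\|u\|_P^2$, i.e.\ $\|y\|_P\le h\|u\|_P$. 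Finally, since $H:\mathcal{L}_{2e}\to\mathcal{L}_{2e}$ we have $y=Hu\in\mathcal{L}_{2e}$, and since $u\in\mathcal{P}$ we have $\|u\|_P<\infty$, so the displayed bound shows $\|y\|_P<\infty$ and hence $y\in\mathcal{P}$.

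I do not expect a genuine obstacle here; this is essentially the classical linear-systems computation alluded to just before the statement. The only point that needs care is that, because $H$ may be nonlinear, one cannot write $Hu=Hu_T+H(u-u_T)$ and discard the tail term, so the argument must route through the causality identity $(Hu)_T=(Hu_T)_T$ rather than through linearity. Thus the whole proof rests on $H$ being causal rather than on any further estimate — and some such hypothesis is in fact needed, since, e.g., the (non-causal) dilation $u\mapsto u(2\,\cdot)$ is FGS with gain $1/\sqrt2$ yet maps the unit-power signal $\theta$ to itself.
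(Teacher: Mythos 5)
Your proof is correct and follows essentially the same route as the paper's: the causality identity $(Hu)_T=(Hu_T)_T$, the fact that truncation does not increase the norm, the finite-gain bound applied to $u_T\in\mathcal{L}_2$, and then the $\limsup$. Your added observation that causality is genuinely needed (with the non-causal dilation counterexample) is a nice touch but does not change the argument.
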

\begin{proof}We find
\begin{align}
    \|y\|_P^2 & = \limsup_{T\rightarrow \infty} \frac{1}{T}\|(Hu)_T\|^2, \nonumber \\
     & = \limsup_{T\rightarrow \infty} \frac{1}{T}\|(Hu_T)_T\|^2\text{ since $H$ is causal,}\nonumber \\
     & \leq \limsup_{T\rightarrow \infty} \frac{1}{T}\|Hu_T\|^2,\nonumber\\
     & \leq \limsup_{T\rightarrow \infty} \frac{1}{T} h^2\|u_T\|^2,\nonumber\\
     & = h^2 \|u\|_P^2. 
\end{align}
\end{proof}
Hence if we define the power gain of $H$ to be
\begin{equation}
    h_P = \sup_{u\in\mathcal{P},\|u\|_P>0} \frac{\|Hu\|_P}{\|u\|_P},
\end{equation}
then $h_P\leq h$.

Suppose $u_1,u_2\in\mathcal{P}$ and $H$ is FGS with gain $h$. Since $\|\cdot\|_P$ is a seminorm we have the triangle inequalities
\begin{align}
        \|Hu_1 & \pm  Hu_2\|_P^2 \nonumber\\
          & \leq \|Hu_1\|_P^2+\|Hu_2\|_P^2 + 2 \|Hu_1\|_P\|Hu_2\|_P\nonumber\\
         & \leq h^2 \left [\|u_1\|_P^2+\|u_2\|_P^2+2\|u_1\|_P\|u_2\|_P\right ],
\end{align}
and
\begin{align}\label{ineq:power}
        \|H(u_1+& u_2)\|_P^2 \leq h^2 \|u_1+u_2\|^2 \nonumber\\
        & \leq h^2 \left [\|u_1\|_P^2+\|u_2\|_P^2+2\|u_1\|_P\|u_2\|_P\right ].
\end{align}
In particular we may say:
\begin{theorem}\label{thm:power_L2}
Suppose $H:\mathcal{L}_{2e}\rightarrow\mathcal{L}_{2e}$ is FGS with gain $h$. Suppose further that $u_1\in\mathcal{L}_2$ and $u_2\in\mathcal{P}$.  Then
\begin{equation}
        \|H(u_1 + u_2)\|_P  \leq h \|u_2\|_P.
\end{equation} 
\end{theorem}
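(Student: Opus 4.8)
The plan is to reduce the statement to Theorem~\ref{thm:power} by showing that adding an $\mathcal{L}_2$ signal to $u_2$ leaves the power seminorm unchanged. First I would record the elementary fact that every $u_1\in\mathcal{L}_2$ lies in $\mathcal{P}$ with $\|u_1\|_P=0$: since $\|(u_1)_T\|^2\leq\|u_1\|^2<\infty$ for every $T>0$, the function $T\mapsto\tfrac{1}{T}\|(u_1)_T\|^2$ tends to $0$ as $T\to\infty$, so the $\limsup$ in (\ref{def_P}) vanishes.

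Next, because $\|\cdot\|_P$ is a seminorm on $\mathcal{P}$, the triangle inequality gives $u_1+u_2\in\mathcal{P}$ with $\|u_1+u_2\|_P\leq\|u_1\|_P+\|u_2\|_P=\|u_2\|_P$ (and in fact equality, though only the upper bound is needed here).

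Finally, applying Theorem~\ref{thm:power} to the signal $u_1+u_2\in\mathcal{P}$ yields $H(u_1+u_2)\in\mathcal{P}$ with $\|H(u_1+u_2)\|_P\leq h\,\|u_1+u_2\|_P\leq h\,\|u_2\|_P$, which is the claim. Equivalently, one can simply invoke inequality (\ref{ineq:power}) with $\|u_1\|_P=0$, whereupon its right-hand side collapses to $h^2\|u_2\|_P^2$.

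There is no real obstacle: the argument is routine. The only point deserving a word of care is the assertion $\|u_1\|_P=0$ for $u_1\in\mathcal{L}_2$, which relies on the \emph{boundedness} of $T\mapsto\|(u_1)_T\|^2$ rather than on any limiting behaviour, and this is precisely the feature that makes the $\limsup$ in the definition of $\mathcal{P}$ (as opposed to a plain limit) convenient in this setting.
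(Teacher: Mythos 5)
Your proposal is correct and follows essentially the same route as the paper: observe that $\|u_1\|_P=0$ for $u_1\in\mathcal{L}_2$, use the seminorm triangle inequality on $u_1+u_2$, and apply Theorem~\ref{thm:power}, which is exactly the content of the paper's inequality (\ref{ineq:power}) that its one-line proof invokes. Your explicit justification that $\|u_1\|_P=0$ is a welcome addition (the paper asserts it without comment), though your closing aside is slightly off: since $\tfrac{1}{T}\|(u_1)_T\|^2$ genuinely converges to $0$, the $\limsup$ plays no special role in that step; it matters instead for making $\mathcal{P}$ a vector space.
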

\begin{proof}
The result follows immediately from (\ref{ineq:power}) since 
 $\|u_1\|_P=0$.
\end{proof}
Hence, if we add small power noise to an $\mathcal{L}_2$ input signal then the output power is small provided the system is FGS, irrespective of the continuity of the input-output map.

\section{Bias signal analysis}
Theorem~\ref{thm:power_L2} has an immediate counterpart when $u_1$ is a bias signal and $H$ is FGOS.
\begin{theorem}\label{thm:power_offset}
Suppose $H:\mathcal{L}_{2e}\rightarrow\mathcal{L}_{2e}$ is FGOS with steady state map $H_0$ and offset gain $h$. Suppose further that $u_1$ is a bias signal with bias $\bar{u}_1$ and  $u_2\in\mathcal{P}$.  Then
\begin{equation}
 \|H(u_1+u_2)-H_0(\bar{u}_1)\|_P  \leq h\|u_2\|_P.
 \end{equation}
\end{theorem}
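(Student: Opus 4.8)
The plan is to reduce the statement to Theorem~\ref{thm:power_L2} applied to the shifted map $H_{\bar{u}_1}$ rather than to $H$ itself. The starting point is the purely algebraic identity that comes straight from Definition~\ref{def:offset}: writing $w = u_1 + u_2 - \bar{u}_1\theta$, we have
\[
H_{\bar{u}_1} w = H(w + \bar{u}_1\theta) - H_0(\bar{u}_1) = H(u_1+u_2) - H_0(\bar{u}_1),
\]
so the left-hand side of the claimed inequality is exactly $\|H_{\bar{u}_1} w\|_P$, and it suffices to bound this quantity.

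Next I would split $w$ as $w = (u_1 - \bar{u}_1\theta) + u_2$. By hypothesis $u_1$ is an $\mathcal{L}_2$-bias signal with bias $\bar{u}_1$, so $u_1 - \bar{u}_1\theta \in \mathcal{L}_2$, while $u_2 \in \mathcal{P}$ is assumed outright. Since $H$ is FGOS with offset gain $h$, the map $H_{\bar{u}_1}\colon \mathcal{L}_{2e}\to\mathcal{L}_{2e}$ is FGS with gain at most $h$, so Theorem~\ref{thm:power_L2} applies to $H_{\bar{u}_1}$ with the two signals $u_1-\bar{u}_1\theta\in\mathcal{L}_2$ and $u_2\in\mathcal{P}$, yielding
\[
\|H_{\bar{u}_1} w\|_P = \bigl\|H_{\bar{u}_1}\bigl((u_1-\bar{u}_1\theta) + u_2\bigr)\bigr\|_P \le h\,\|u_2\|_P .
\]
Combining this with the identity of the first paragraph gives the result.

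The argument is essentially bookkeeping with the definitions, so I do not anticipate a genuine obstacle; the one point that deserves a word of care is that the constant $h$ invoked here is the \emph{offset} gain, i.e.\ a gain bound on $H_{\bar{u}}$ that holds uniformly over all $\bar{u}\in\mathbb{R}$, so that its use at the specific value $\bar{u}=\bar{u}_1$ is legitimate and the conclusion is genuinely independent of $\bar{u}_1$. I would also note in passing that $\theta\in\mathcal{L}_{2e}$, so that $w$ and $u_1+u_2$ indeed lie in $\mathcal{L}_{2e}$ and all the maps above are well defined where they are used.
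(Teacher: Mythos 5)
Your proof is correct and is essentially identical to the paper's: the authors also rewrite $H(u_1+u_2)-H_0(\bar{u}_1)$ as $H_{\bar{u}}(u_1-\bar{u}_1\theta+u_2)$ and then invoke Theorem~\ref{thm:power_L2} with $u_1-\bar{u}_1\theta\in\mathcal{L}_2$ and $u_2\in\mathcal{P}$. You have merely spelled out the bookkeeping (and the uniformity of the offset gain) that the paper leaves implicit.
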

\begin{proof}
The result follows from Theorem~\ref{thm:power_L2} 
since
\begin{equation}
 \|H(u_1+u_2)-H_0(\bar{u}_1)\|_P = \|H_{\bar{u}}(u_1-\bar{u}_1\theta+u_2)\|_P,
 \end{equation}
 where $H_{\bar{u}}$ is given by Definition~\ref{def:offset}. 
\end{proof}
Hence, if we add small power noise to a input bias signal then the output power (measured about the noise-free output bias) is small provided the system is FGOS, irrespective of the continuity of the input-output map.

The application to a Lurye system is immediate provided we can show the system is FGOS. It turns out that the existence of a suitable OZF multiplier $M\in\mathcal{M}$ guarantees this, but the existence of a suitable $M\in\mathcal{M}_{\text{odd}}$ does not.

\begin{theorem}\label{thm:offset}
If there is 
an ${M}\in\mathcal{M}$  suitable for~${G}$ then the Lurye system (\ref{eq:Lurye}) is FGOS for any memoryless time-invariant monotone bounded nonlinearity $\boldsymbol{\phi}$. If there is 
 an ${M}\in\mathcal{M}$  suitable for $1/k+{G}$ then the Lurye system (\ref{eq:Lurye}) is FGOS for any memoryless time-invariant slope-restricted nonlinearity $\boldsymbol{\phi}$ in $[0,k]$.
\end{theorem}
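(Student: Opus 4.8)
The plan is to reduce the statement, for each fixed value of the input bias, to the ordinary OZF stability theorem applied to a \emph{shifted} Lurye system, the point being that the gain bound one obtains is the same for every bias. First I would dispose of the slope-restricted case by the standard loop transformation: a nonlinearity slope-restricted on $[0,k]$ with forward transfer function $\boldsymbol{G}$ becomes a monotone bounded one with forward transfer function $1/k+\boldsymbol{G}$, the transformation is bias-preserving, and suitability of $M$ for $1/k+G$ is exactly suitability of $M$ for the forward element of the transformed loop; so it suffices to treat the monotone bounded case with $M\in\mathcal{M}$ suitable for $G$.

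Next I would pin down the steady state. Evaluating the suitability inequality at $\omega=0$ gives $M(0)G(0)>\varepsilon>0$, and since $h\ge 0$, $h_i\ge 0$ with $\|h\|_1+\sum_i|h_i|\le m_0$ force $M(0)=m_0-\|h\|_1-\sum_i h_i\ge 0$, we get $M(0)>0$ and hence $G(0)>0$. Because $N$ is nondecreasing with $N(0)=0$, the map $x\mapsto x+G(0)N(x)$ is strictly increasing and onto $\mathbb{R}$, so for every bias $\bar{r}$ of $r_2$ there is a unique $u_2^{*}$ with $u_2^{*}+G(0)N(u_2^{*})=\bar{r}$; set $u_1^{*}=-N(u_2^{*})$, $y_1^{*}=G(0)u_1^{*}$, $y_2^{*}=N(u_2^{*})$ and define the steady-state map by $H_0(\bar{r})=y_2^{*}$. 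Reading the feedback loop with the non-zero-initial-condition convention of Remarks~\ref{rem:ic1}--\ref{rem:ic2}, so that it sits at this equilibrium at $t=0$, and writing $\tilde{r}_2=r_2-\bar{r}\theta$, $\tilde{u}_i=u_i-u_i^{*}\theta$, $\tilde{y}_i=y_i-y_i^{*}\theta$, the shifted signals satisfy a zero-initial-condition Lurye system $\tilde{y}_1=\boldsymbol{G}\tilde{u}_1$, $\tilde{u}_1=-\tilde{y}_2$, $\tilde{u}_2=\tilde{y}_1+\tilde{r}_2$, $\tilde{y}_2=\tilde{\boldsymbol{\phi}}\tilde{u}_2$, where $\tilde{\boldsymbol{\phi}}$ is characterised by $\tilde{N}(x)=N(x+u_2^{*})-N(u_2^{*})$, again monotone with $\tilde{N}(0)=0$.

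Since $\boldsymbol{G}$ and $M$ are unchanged, $\mathrm{Re}\{M(j\omega)G(j\omega)\}>\varepsilon$ still holds, and $\tilde{N}$ is monotone, the OZF argument applies to the shifted system and gives a bound $\|\tilde{y}_2\|\le h\|\tilde{r}_2\|$ in which $h$ depends only on $M$, $G$ and $\varepsilon$. As $H_{\bar{r}}\tilde{r}_2=\tilde{y}_2$ and $h$ does not depend on $\bar{r}$, this is exactly finite-gain offset stability with offset gain at most $h$. It is here that the hypothesis $M\in\mathcal{M}$, rather than merely $M\in\mathcal{M}_{\text{odd}}$, matters: $\tilde{\boldsymbol{\phi}}$ is in general not odd even when $\boldsymbol{\phi}$ is, so only the sign-constrained class carries the Zames--Falb inequality $\langle M\tilde{u}_2,\tilde{\boldsymbol{\phi}}\tilde{u}_2\rangle\ge 0$ needed to close the shifted loop.

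The step I expect to be the real obstacle is making that last paragraph rigorous rather than merely invoking it. The shifted nonlinearity $\tilde{N}$ need not inherit the global bound $|\tilde{N}(x)|\le\tilde{C}|x|$ (a steep feature of $N$ near $u_2^{*}$ destroys it), so the OZF stability theorem cannot be quoted verbatim for $\tilde{\boldsymbol{\phi}}$. One must revisit its proof and check that what is genuinely required — monotonicity of the shifted nonlinearity together with a priori membership of the loop signals in $\mathcal{L}_{2e}$ — is available: the latter holds because $\tilde{u}_2$ and $\tilde{y}_2$ arise by an affine change of variables from the signals of the original, well-posed Lurye system, and the former is immediate. One must also confirm that the gain delivered by that argument is controlled only by $M$, $G$ and $\varepsilon$, with no dependence on the operating point $u_2^{*}$, since that uniformity is precisely what upgrades finite-gain stability of each shifted system to finite-gain \emph{offset} stability.
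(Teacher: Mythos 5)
Your proposal follows essentially the same route as the paper's proof: fix the bias, use monotonicity (with $G(0)>0$) to identify the unique constant equilibrium, recentre the nonlinearity as $N(x+\bar{u}_2)-N(\bar{u}_2)$, and apply the OZF theorem to the shifted Lurye system with the same $M$ and $G$, noting that the resulting gain is independent of the operating point and that the shifted nonlinearity is generally not odd --- which is exactly why $M\in\mathcal{M}$ rather than $M\in\mathcal{M}_{\text{odd}}$ is required. Your closing caveat, that the recentred nonlinearity need not inherit the bound $|N(x)|\leq C|x|$ when $N$ is only monotone and bounded (though it does under slope restriction), is a point the paper's proof passes over by simply asserting that $\boldsymbol{\phi}_{\bar{r}_2}$ is bounded, so your suggested patch is a refinement of, not a departure from, the paper's argument.
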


\begin{proof}
Without loss of generality suppose $r_1=0$ and let $H$ be the map from $r_2$ to $y_2$. Let $r_2$ be a bias signal with bias $\bar{r}_2$. Suppose first that $r_2$ is constant 
$r_2=\bar{r}_2\theta$
and $G$ is a fixed gain $G=\bar{G}$. The monotonicity of $\boldsymbol{\phi}$ ensures there is a unique fixed solution 
$y_1=\bar{y}_1\theta$ \cite{desoer75}.
This defines our candidate bias function $H_o(\bar{r}_2)=\bar{y}_1$. 
The input to the nonlinearity is $u_2=\bar{u}_2\theta$ where $\bar{u}_2=\bar{r}_2+\bar{y}_1$ and its output is $y_2=\bar{y}_2\theta$ where $\bar{y}_2=N(\bar{u}_2)$ and the nonlinearity $\boldmath{\boldsymbol{\phi}}$ is characterised by $N:\mathbb{R}\rightarrow\mathbb{R}$. We can define a memoryless, time-invariant, bounded and monotone nonlinearity $\boldsymbol{\phi}_{\bar{r}_2}$ characterised by $N_{\bar{r}_2}$ where
\begin{equation}
N_{\bar{r}_2}(x) = N (x+\bar{u}_2) - \bar{y}_2 \text{ for all }x\in\mathbb{R}.
\end{equation}
This in turn defines a normalized Lurye system with linear element $G$ and nonlinear element $\boldsymbol{\phi}_{\bar{r}_2}$. Denote $H_{\bar{r}_2}$ as the map from $r_2-\bar{r}_2\theta$ to $y_1-\bar{y}_1\theta$. If $M\in\mathcal{M}$ is suitable for $G$ then $H_{\bar{r}_2}$ is FGS. It follows that $H$ is FGOS.

If $\boldsymbol{\phi}$ is in addition slope-restricted on $[0,k]$ then $\boldsymbol{\phi}_{\bar{r}_2}$ is also slope-restricted on $[0,k]$.
\end{proof}

\begin{remark}
There is no corresponding result when $M\in\mathcal{M}_{\text{odd}}-\mathcal{M}$. In this case  $\boldsymbol{\phi}_{\bar{r}_2}$ need not be odd even if $\boldsymbol{\phi}$ is odd. Hence if $M\in\mathcal{M}_{\text{odd}}-\mathcal{M}$ is suitable for $G$ there is no guarantee that $H_{\bar{r}_2}$ is stable.
\end{remark}

\begin{remark}
Suppose $M$ is suitable for $G$ (or for $1/k+G$) and can be used to ensure the $\mathcal{L}_2$ gain of $H$ is bounded above by $h$. It follows from the proof of Theorem~\ref{thm:offset} that the offset gain of $H$ is also bounded above by $h$.
\end{remark}

\section{Example}

\subsection{Discrete time preliminaries}

Although our development has been for continuous-time systems, we illustrate with a discrete-time example. Discrete-time counterparts to our results follow immediately, with 
the spaces $\ell$ of real-valued sequences $h:\mathbb{Z}^+\rightarrow \mathbb{R}$ and $\ell_2$ of square-summable sequences $h:\mathbb{Z}^+\rightarrow \mathbb{R}$  taking the places of $\mathcal{L}_{2e}$ and   $\mathcal{L}_2$ respectively. Similarly $\ell_2$ gain takes the place of $\mathcal{L}_2$ gain.

Specifically, let $\mathbb{D}$ denote the unit circle. Respective counterparts to Definitions~\ref{def2a},~\ref{def2b} and~\ref{def1} are as follows.

\begin{definition}\label{def2ad}
Let $\mathcal{M}^d$ be the class of 
discrete-time transfer functions ${M}:\mathbb{D}\rightarrow\mathbb{C}$  of systems whose (possibly non-causal) impulse response is given by
\begin{equation}\label{def_md}
m(t) = m_0 \delta(t)-\sum_{i\neq 0}h_i \delta(t-i),
\end{equation}
with
$ h_i\geq 0 \mbox{ for all } i$
and
\begin{equation}\label{OZF_ineqd}
 \sum_{i\neq 0} |h_i| 
 \leq  m_0.
\end{equation}
We say ${M}$ is a discrete-time OZF multiplier if ${M}\in\mathcal{M}^d$.
\end{definition}
\begin{definition}\label{def2bd}
Let $\mathcal{M}^d_{\text{odd}}$ be the class of 
discrete-time transfer functions  ${M}:\mathbb{D}\rightarrow\mathbb{C}$ of systems  whose (possibly non-causal) impulse response is given by (\ref{def_md})
with (\ref{OZF_ineqd}).
We say ${M}$ is a discrete-time OZF multiplier for odd nonlinearities if ${M}\in\mathcal{M}^d_{\text{odd}}$.
\end{definition}

\begin{definition}\label{def1d}
Let 
${M}:\mathbb{D}\rightarrow\mathbb{R}$ and let ${G}:\mathbb{D}\rightarrow\mathbb{C}$. 
We say ${M}$ is suitable for ${G}$ if
\begin{align}
\mbox{Re}\left \{
				{M}(e^{j\omega}) {G}(e^{j\omega})
			\right \} > 0\mbox{ for all } \omega \in [0,2\pi).
\end{align}
\end{definition}

The counterpart to Theorem 1 is direct and standard \cite{Willems68,Willems71}. Direct counterparts to Theorems 2-5 follow immediately.

\begin{remark}
Once again we have used a non-strict inequality in the definition of the OZF multipliers (\ref{OZF_ineqd}); see Remark~\ref{rem:strict}.
The original literature \cite{Willems68,Willems71} includes time-varying multipliers but this is unnecessary \cite{Kharitenko23}.
\end{remark}

\subsection{The example}

Consider the Lurye system (\ref{eq:Lurye}) where $\boldsymbol{\phi}$ is the saturation  characterised by $N:\mathbb{R}\rightarrow\mathbb{R}$ with
$N(x)=x$ when $|x|<1$ and
$N(x)=x/|x|$ when $|x|\geq 1$,
and $\boldsymbol{G}$ has the discrete-time transfer function
\begin{equation}\label{defG}
    G(z)=g\frac{2z+0.92}{z(z-0.5)},
\end{equation}
and where the gain $g$ takes one of three values: $g=0.6$, $g=0.8$ or $g=1$. NB we have considered this Lurye system previously with $g=1$ \cite{Heath15,Heath22}.

Following standard analysis (e.g. \cite{Veenman16}), we may say that if $M\in \mathcal{M}^d_{odd}$ is suitable for $G$ and 
\begin{align}
2 & \text{Re} [M(e^{j\omega})(1+G(e^{j\omega}))]\gamma^2   \nonumber\\
 & -(|G(e^{j\omega})|^2+|M(e^{j\omega})|^2)\gamma- 2 \text{Re}[M(e^{j\omega})] >0,
\end{align}
for all $\omega\in[0,2\pi)$ and for some $\gamma\geq 1$ then the $l_2$ gain from $r_2$ to $u_2$ is bounded above by $\gamma$. If $M\in\mathcal{M}^d\subset \mathcal{M}^d_{odd}$ then the offset gain is also bounded above by $\gamma$ (see Remark~4).

\begin{description}
\item{When $g=0.6$} finite incremental  gain of the Lurye system may be established via the circle criterion. Specifically $\text{Re}\left [1+G(e^{j\omega})\right ]>0$ for all $\omega\in[0,2\pi)$.  Although dynamic multipliers are not required to establish $\ell_2$ stability, they may still be useful to find reduced upper bounds on the $\ell_2$ gain, and hence on the offset gain. In this case the circle criterion alone establishes an upper bound $h\leq 16.3156$ but the multiplier $M(e^{j\omega})=1-0.66e^{-j\omega}$ establishes an upper bound  $h \leq 4.1795$.

\item{When $g=0.8$} the circle criterion can no longer be applied, but there are multipliers  $M\in\mathcal{M}^d$ suitable for $G$. It follows from the discrete-time counterpart of 
Theorem~\ref{thm:offset}
that the Lurye system is FGOS. For example the multiplier $M=1-0.85e^{-j\omega}$ is suitable for $G$ and establishes an upper bound $h\leq 12.8983$ on the $\ell_2$ gain, and hence on the offset gain.

\item{When $g=1$} we find
     $\angle [1+gG(e^{2\pi j/3}) ]  = -\pi+\text{atan} \frac{31\sqrt{3}}{48}$ \\
    $< -\frac{2\pi}{3}$.
It follows from the phase limitations at single frequencies \cite{Zhang22} that there is no $M\in\mathcal{M}^d$ suitable for $G$. Nevertheless there are multipliers $M\in\mathcal{M}^d_{odd}$ suitable for $G$. It follows that the Lurye system is FGS, but not necessarily FGOS in this case. For example the multiplier $M=1+0.9e^{j\omega}$ is suitable for $G$ and establishes an upper bound $h \leq 31.332$ on the $\ell_2$ gain.
\end{description}

\def\figwidth{0.41}

    \begin{figure}[t]
        \begin{center}
        \includegraphics[width=\figwidth\textwidth]{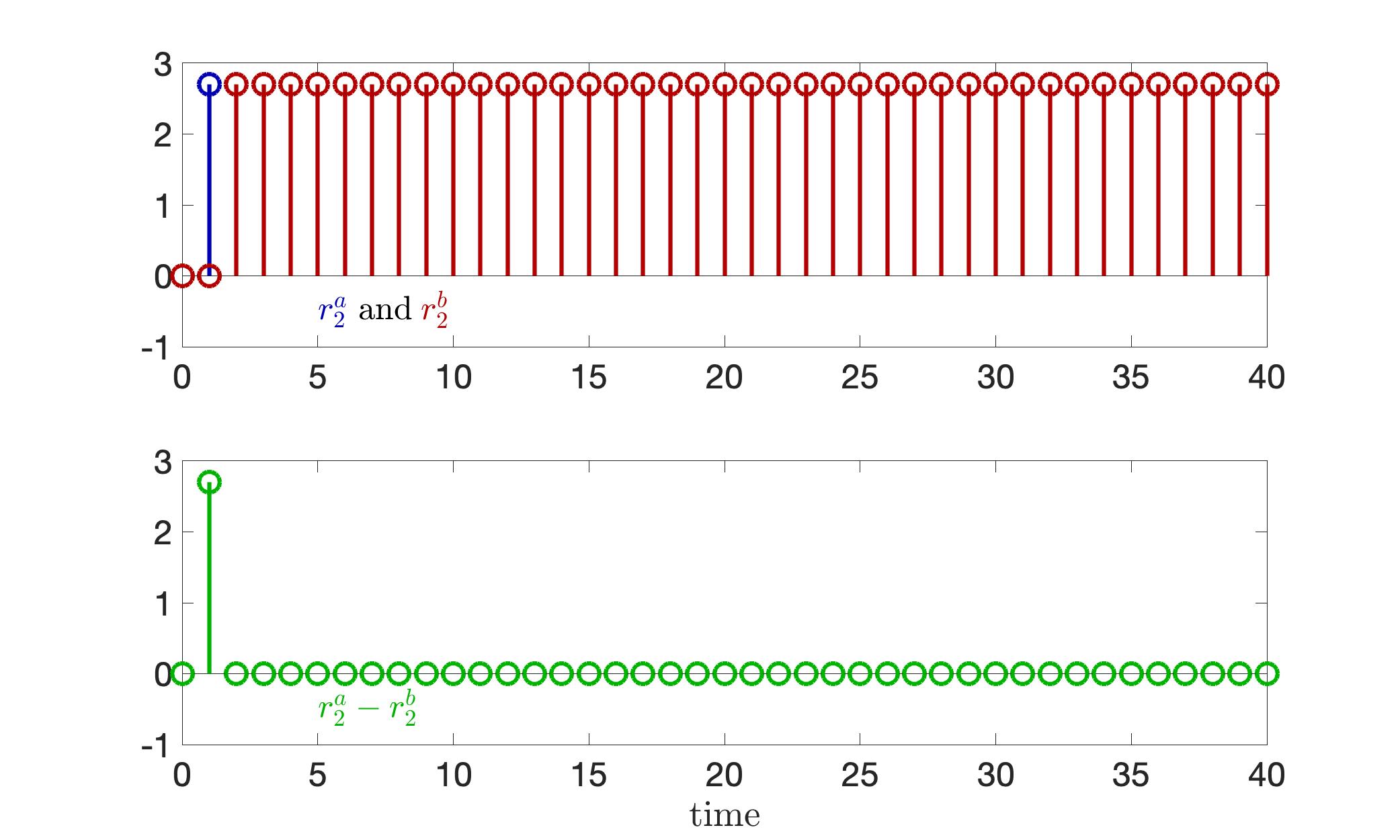}
        \end{center}
        \caption{Deterministic step signals $r_2^a$ and $r_2^b$. Both are bias signals, with $r_2^b$ a delayed version of $r_2^a$. Their difference $r_2^a-r_2^b\in\ell_2$.}\label{Fig:step1}
        \begin{center}
        \includegraphics[width=\figwidth\textwidth]{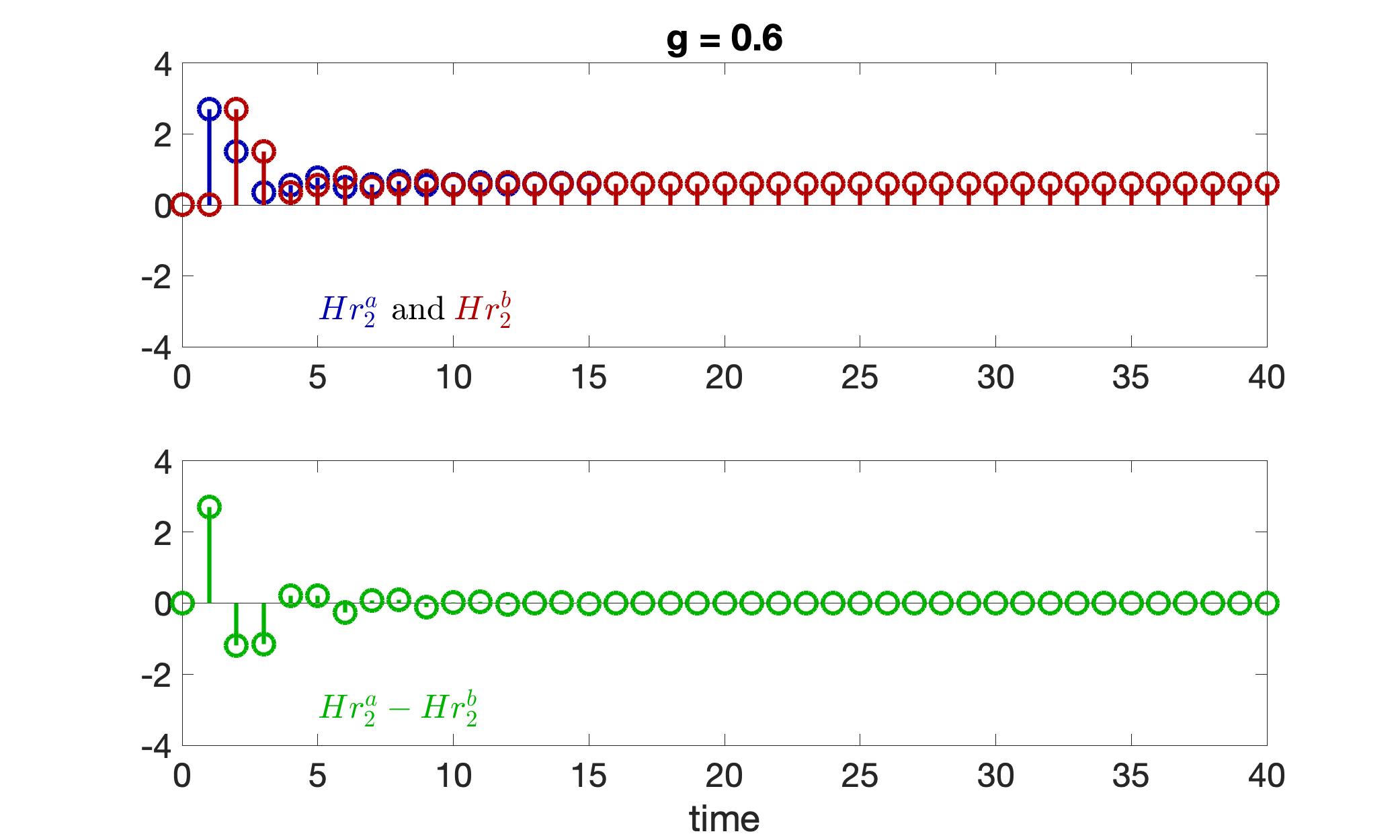}
        \end{center}
        \caption{Respective responses $Hr_2^a$ and $Hr_2^b$ to deterministic step signals $r_2^a$ and $r_2^b$ when the gain is $g=0.6$. The circle criterion guarantees finite incremental gain and hence the difference  $Hr_2^a-Hr_2^b\in\ell_2$.}\label{Fig:step2}
    \end{figure}

\subsection{Deterministic step response}
    Let $r_2$ be either the step response $r_2^a$ or the delayed step response $r_2^b$ given by
    \begin{align}
        r_2^a(t) & = \left \{\begin{array}{lll}0 & \text{when} & t=0,\\ 2.7 & \text{when} & t\geq 1,\end{array} \right . \nonumber\\
        r_2^b(t) & = \left \{\begin{array}{lll}0 & \text{when} & t=0,1,\\ 2.7 & \text{when} & t\geq 2.\end{array} \right .
    \end{align}
    Although $r_2^a\notin\ell_2$ and $r_2^b\notin\ell_2$ we have $r_2^a-r_2^b\in\ell_2$ (see Fig.~\ref{Fig:step1}).

    When $g=0.6$ (Fig.~\ref{Fig:step2}) the closed-loop system is incrementally stable so the difference $Hr_2^a-Hr_2^b\in\ell_2$. When $g=0.8$ (Fig.~\ref{Fig:step3}) the results of this paper show that the closed-loop system is FGOS so once again $Hr_2^a-Hr_2^b\in\ell_2$. By contrast, when $g=1$  (Fig.~\ref{Fig:step4}) the outputs $Hr_2^a$ and $Hr_2^b$ oscillate in steady state, and hence the difference $Hr_2^a-Hr_2^b$ also oscillates (since $Hr_2^b$ is a delayed version of $Hr_2^a$); i.e. the difference $Hr_2^a-Hr_2^b\notin\ell_2$ is a power signal. If we consider the response to the truncated signals $(r_2^a)_T,(r_2^b)_T,\in\ell_2$ then we find that when $g=1$ we obtain
    \begin{equation}
        \|H (r_2^a)_T-H(r_2^b)_T\| / \|(r_2^a)_T-(r_2^b)_T\|\rightarrow\infty \text{ as } T\rightarrow\infty.
    \end{equation}
    
    \begin{remark}
        Since $r_2$ is a power signal, Theorem~\ref{thm:power} can be applied to all three cases. However for the case $g=1$, since the closed-loop system is not FGOS, the power must be normalized around $0$ rather than around the steady state values.
    \end{remark}

    \begin{figure}[tbp]
        \begin{center}
        \includegraphics[width=\figwidth\textwidth]{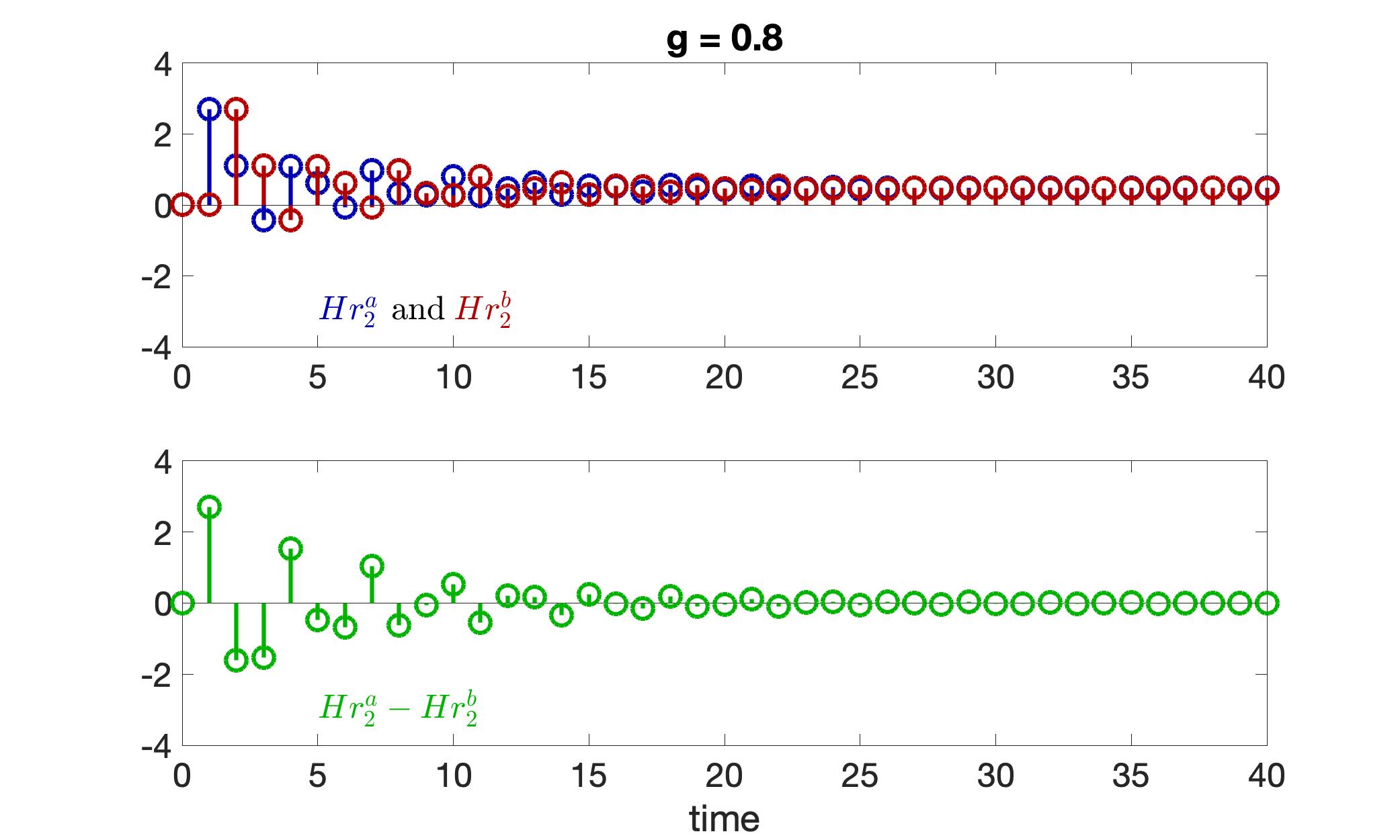}
        \end{center}
        \caption{Respective responses $Hr_2^a$ and $Hr_2^b$ to deterministic step signals $r_2^a$ and $r_2^b$ when the gain is $g=0.8$. There is a suitable multiplier $M\in\mathcal{M}^d$ for $G$. It follows that $H$ is FGOS and the difference  $Hr_2^a-Hr_2^b\in\ell_2$.}\label{Fig:step3}
        \begin{center}
        \includegraphics[width=\figwidth\textwidth]{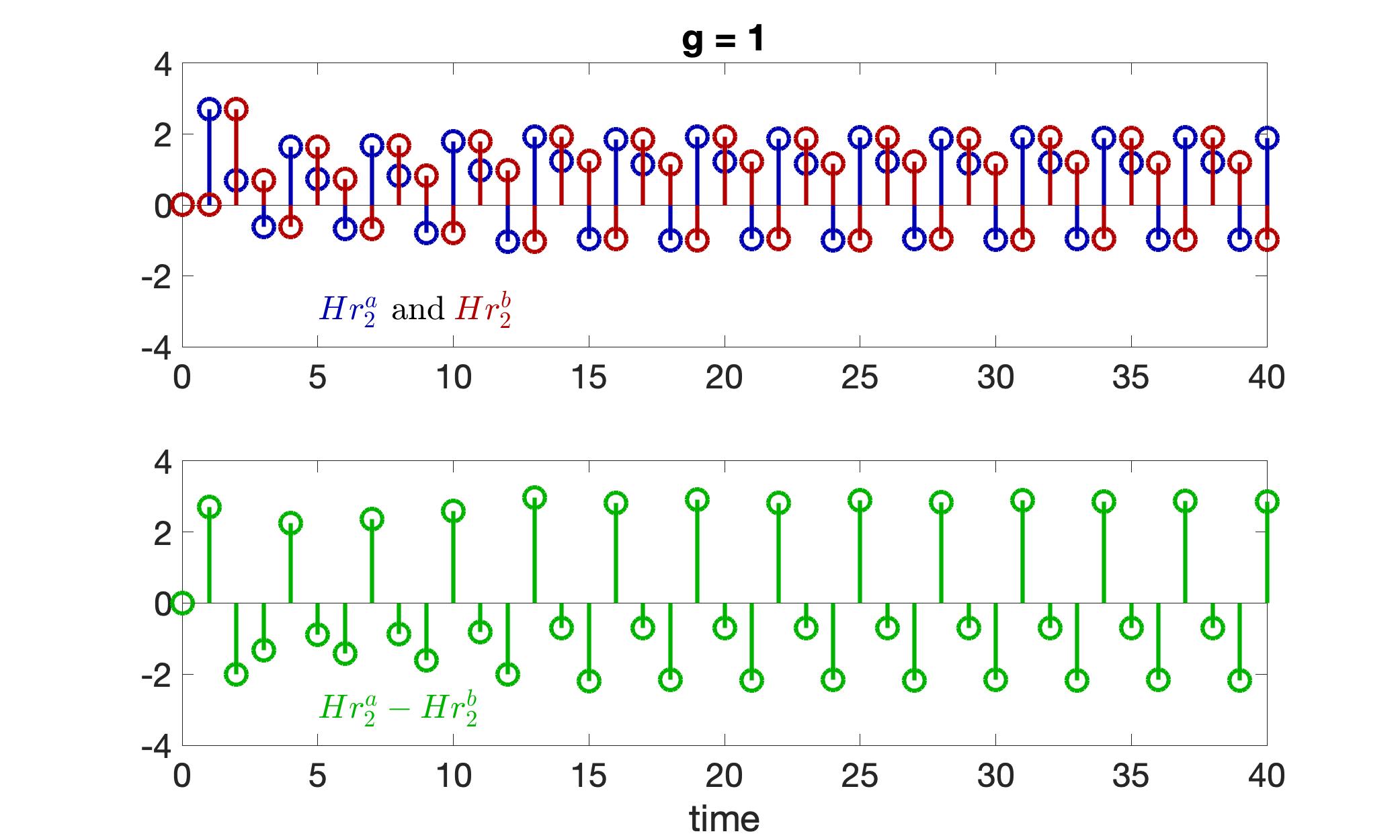}
        \end{center}
        \caption{Respective responses $Hr_2^a$ and $Hr_2^b$ to deterministic step signals $r_2^a$ and $r_2^b$ when the gain is $g=1$. There is no suitable multiplier $M\in\mathcal{M}^d$ for $G$.
        Both responses oscillate: their difference $Hr_2^a-Hr_2^b$ is a power signal but $Hr_2^a-Hr_2^b\notin\ell_2$.}\label{Fig:step4}
    \end{figure}

\subsection{Step response with noise}
    Now suppose $r_2$ is a pulse signal taking values $0$ or $2.7$ with period $400$. In addition $r_1$ is pseudo-random Gaussian white noise with mean $0$ and variance 
    $10^{-3}$.
    Fig.~\ref{fig:noise1} shows the signals $r_1$ and $r_2$ over three periods of $r_2$ while Fig.~\ref{fig:noise2} shows the corresponding signal $u_2$ for the three cases $g=0.6$, $g=0.8$ and $g=1$. While there are quantitative differences between all responses, there is a qualitative difference when $g=1$ and when $r_2$ is high. This is exactly what me might expect, as  the discrete-time counterpart of  Theorem~\ref{thm:power_L2} can be applied to all three cases when $r_2$ is low, but  the discrete-time counterpart of  Theorem~\ref{thm:power_offset} only applies to the cases $g=0.6$ and $g=0.8$.
    For the first two periods with $g=1$ an oscillatory response is evoked when $r_2$ is high, but not for the third period. In this case the response is indeed ``critically sensitive to small changes in inputs'' \cite{Zames66a}.
    
    \begin{figure}[htbp]
        \begin{center}
        \includegraphics[width=\figwidth\textwidth]{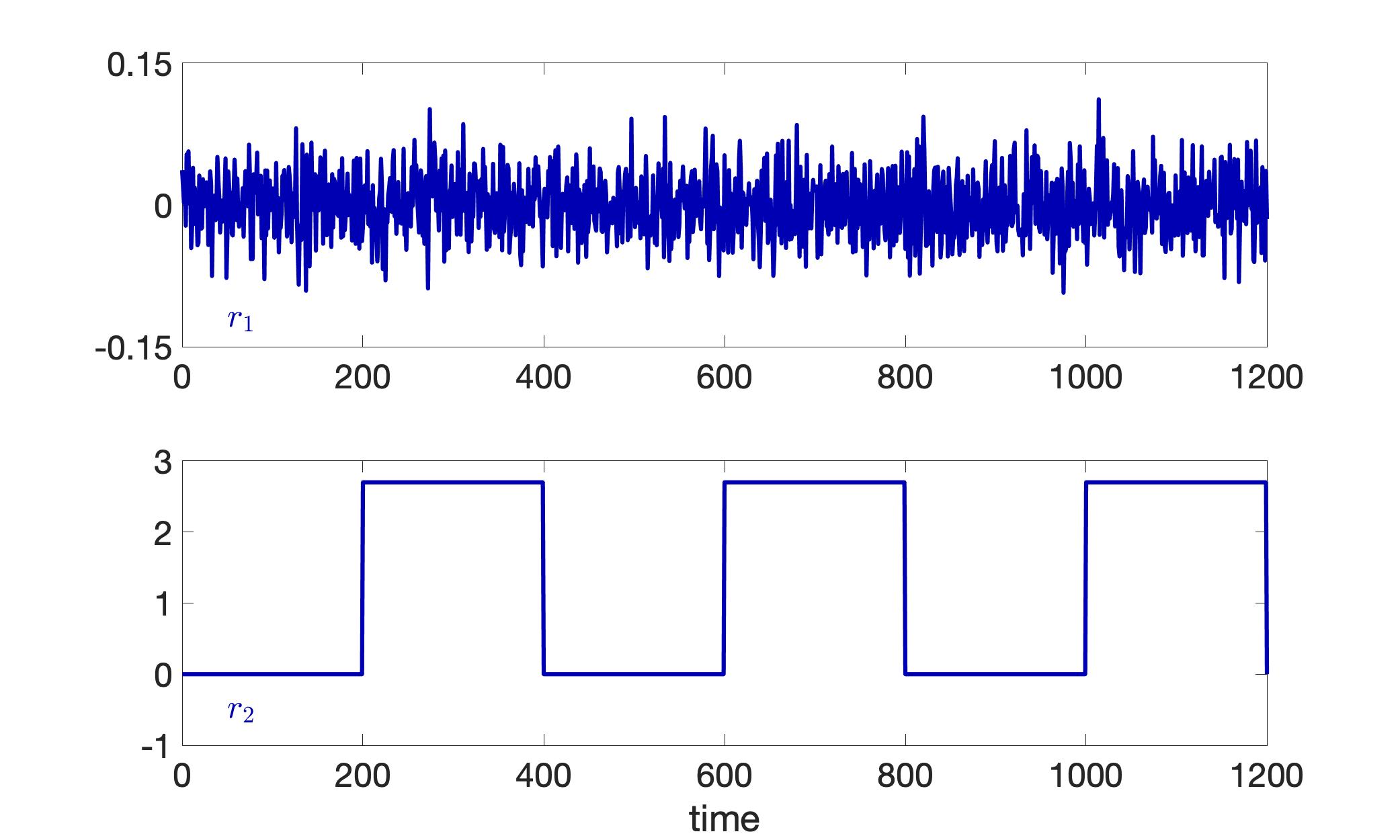}
        \end{center}
        \caption{Exogenous signals $r_1$ and $r_2$ for the step response with noise. Both are power signals.}\label{fig:noise1}
        \begin{center}
        \includegraphics[width=\figwidth\textwidth]{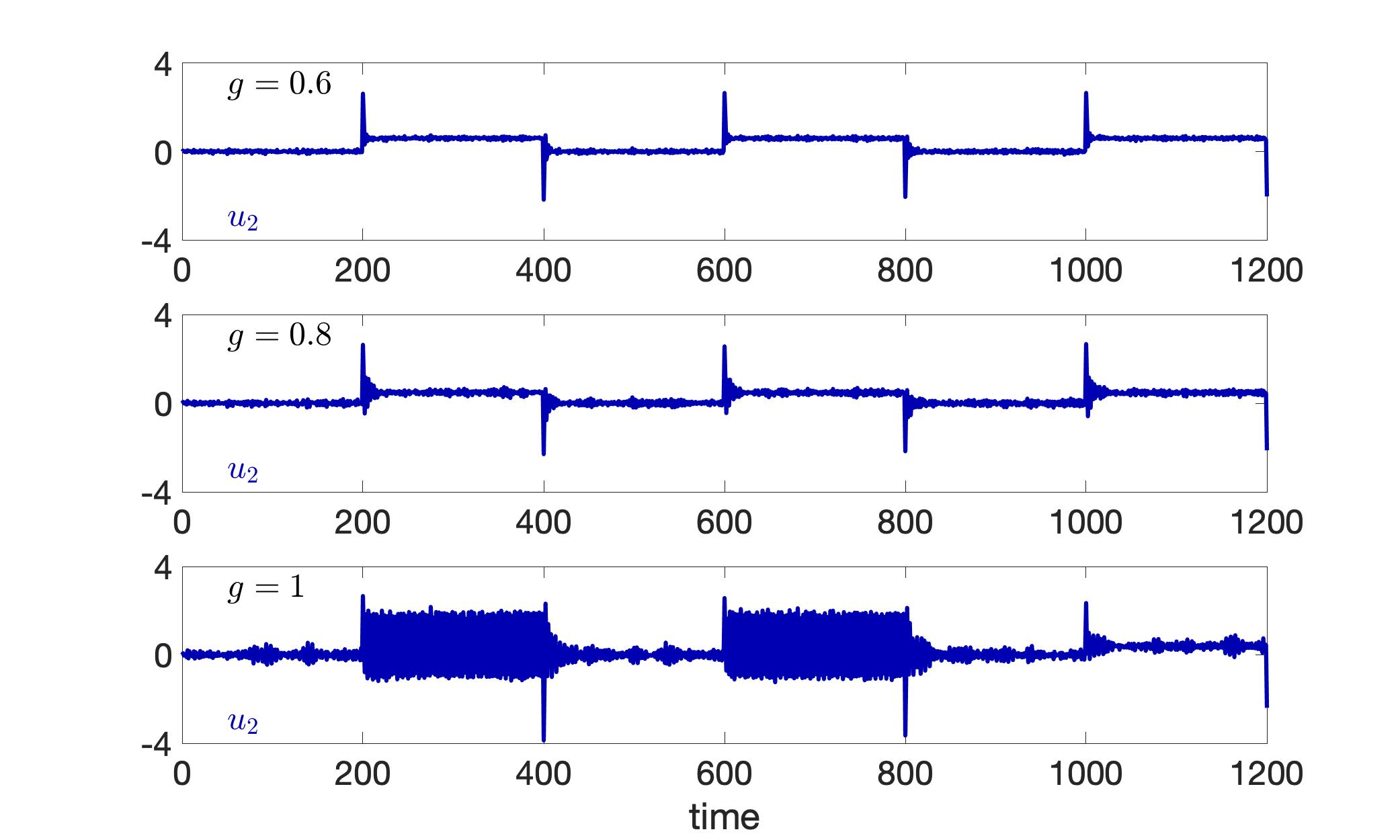}
        \end{center}
        \caption{Response $u_2$ for the cases $g=0.6$, $g=0.8$ and $g=1$. There is a qualitative difference for the case $g=1$ in the time intervals $[200,400]$ and $[600,800]$ (when $r_2$ is high), indicating that the response is  critically sensitive to noise in this case.}\label{fig:noise2}
    \end{figure}


\section{Discussion}

We have shown that the existence of a suitable OZF multiplier in $\mathcal{M}$ (but not $\mathcal{M}_\text{odd}-\mathcal{M}$) ensures a Lurye system is FGOS. This in turn ensures that if the exogenous signal has small power measured around some bias then the output also has small power measured around a uniquely determined bias.

The example of \cite{Fromion04} shows that the result cannot be extended (in general) to exogenous signals that have small power measured around some periodic signal. From a practical point of view it would be interesting to know if there is an extension to signals that are sufficiently slow moving. 

It is worth noting that Theorem~\ref{thm:power_offset} does not exclude discontinuity. The example in \cite{Fromion04} shows Lipschitz discontinuity when the exogenous signal is $r_2(t) = \sin (2 t)$. For $T>0$ sufficiently big the discontinuity will still occur if the input is $r_2(t)=\sin(2t)$ for $t<T$ and $r_2(t)=\sin(2T)$ for $t\geq T$. For this example the presence of small power noise may significantly affect the transient behaviour, even if the output power is guaranteed small.

Finally we note that both the example of \cite{Fromion04} and our example with $g=1$ have high $\mathcal{L}_2$ or $\ell_2$ gain $h$. A lower bound is given by the peak sensitivity when the nonlinearity with a gain corresponding to its maximum slope. For the example in \cite{Fromion04} this lower bound is 
$ \|(1+G)^{-1} \|_{\infty}$ with $G(s) = \frac{909}{(s^2+0.1s+1)(s+100)}$, viz $h\geq 311.35$.
For our example this lower bound is 
$ \|(1+G)^{-1} \|_{\infty}$ with $G$ given by (\ref{defG}) and $g=1$, viz $h\geq 28.58$.
We know of no such example when the gain $h$ is small.






\bibliographystyle{IEEEtran}
\bibliography{ok}

\end{document}